% autosam.tex
% Annotated sample file for the preparation of LaTeX files
% for the final versions of papers submitted to or accepted for 
% publication in AUTOMATICA.

% See also the Information for Authors.

% Make sure that the zip file that you send contains all the 
% files, including the files for the figures and the bib file.

% Output produced with the elsart style file does not imitate the
% AUTOMATICA style. The style file is generic for all Elsevier
% journals and the output is laid out for easy copy editing. The
% final document is produced from the source file in the
% AUTOMATICA style at Elsevier.

% You may use the style file autart.cls to obtain a two-column 
% document (see below) that more or less imitates the printed 
% Automatica style. This may helpful to improve the formatting 
% of the equations, tables and figures, and also serves to check 
% whether the paper satisfies the length requirements.

% Please note: Authors must not create their own macros.

% For further information regarding the preparation of LaTeX files 
% for Elsevier, please refer to the "Full Instructions to Authors" 
% from Elsevier's anonymous ftp server on ftp.elsevier.nl in the
% directory pub/styles, or from the internet (CTAN sites) on
% ftp.shsu.edu, ftp.dante.de and ftp.tex.ac.uk in the directory
% tex-archive/macros/latex/contrib/supported/elsevier.

%\documentclass{elsart}               % The use of LaTeX2e is preferred.

\documentclass[onecolumn]{autart}    % Enable this line and disable the 
                                     % preceding line to obtain a two-column 
                                     % document whose style resembles the
                                     % printed Automatica style.

\usepackage{mathptmx} % assumes new font selection scheme installed
\usepackage{times} % assumes new font selection scheme installed
\usepackage{amsmath} % assumes amsmath package installed
\usepackage{amssymb}  % assumes amsmath package installed
\usepackage{graphicx}
\usepackage{xcolor,comment}
\usepackage{float}
\usepackage{color}
\usepackage{graphicx}          % Include this line if your 
                               % document contains figures,
%\usepackage[dvips]{epsfig}    % or this line, depending on which
                               % you prefer.

\newtheorem{theorem}{Theorem}
\newtheorem{proposition}{Proposition}
\newtheorem{lemma}{Lemma}

\newtheorem{remm}{Remark}
\newenvironment{remark}{\begin{remm}\rm }{\hfill \hspace*{1pt} \hfill $\lrcorner$\end{remm}}
\newenvironment{proof}{\noindent {\em Proof.}}{\hfill \hspace*{1pt}\hfill $\square$\\}

\begin{document}

\begin{frontmatter}
%\runtitle{Insert a suggested running title}  % Running title for regular 
                                              % papers but only if the title  
                                              % is over 5 words. Running title 
                                              % is not shown in output.

\title{A Reference Governor for linear systems with polynomial constraints} % Title, preferably not more 
                                                % than 10 words.

\thanks[footnoteinfo]{This paper was not presented at any IFAC 
	meeting. Corresponding author L.~Burlion. Tel. +1 848 445 2046. 
}

\author[Paestum]{Laurent Burlion}\ead{laurent.burlion (at) rutgers.edu},    
\author[Paestum]{Rick Schieni},        
\author[Rome]{Ilya Kolmanovsky}

\address[Paestum]{Rutgers University, Piscataway, New Jersey 08854}  % Please supply                                              
\address[Rome]{University of Michigan, Ann Arbor, Michigan 48109}             % full addresses
             % full 

\begin{keyword}                           % Five to ten keywords,  
linear  systems,  nonlinear  constraint, reference  governors, polynomial methods               % chosen from the IFAC 
\end{keyword}                             % keyword list or with the 
                                          % help of the Automatica 
                                          % keyword wizard

\begin{abstract}                          % Abstract of not more than 200 words.

The paper considers the application  of reference  governors to linear discrete-time  systems with  constraints given by polynomial inequalities.  We propose a novel algorithm  to compute the maximal output admissible  invariant set in the  case of  polynomial  constraints. The reference governor solves a  constrained nonlinear minimization problem at initialization and then uses a bisection algorithm at the subsequent time steps. The effectiveness of the method is demonstrated by two numerical examples.

\end{abstract}

\end{frontmatter}

\section{Introduction}

Reference governors are add-on schemes that, whenever possible, preserve the response of a nominal controller designed by conventional control techniques \cite{GARONE2017306} while ensuring that output constraints are not violated. Conventional reference governor schemes are based on the so called maximal output admissible set (i.e, the set of initial conditions and constant reference signals which guarantee present and future output constraint satisfaction. Finitely determined invariant inner approximations of this set can be easily computed when both the systems dynamics and output constraints are linear.\\ Of particular importance in control theory is the case when the nominal controller is based on feedback linearization (see, e.g, \cite{Isidori1995,Krener2015}) and the nonlinear dynamics are rendered linear by a coordinate transformation and an appropriately defined feedback law. However, when input or output constraints are present, even if they were linear to start with, they are generally transformed into nonlinear constraints on the transformed coordinates. Therefore, in general, feedback linearization cannot be combined with a conventional reference governor \cite{GARONE2017306} which assumes a linear model and linear constraints. A standard reference governor attached to a typical plant is illustrated in Figure \ref{fig_image_RefGov}.
\begin{figure}[!ht]
	\begin{center}
		\includegraphics[width=8cm]{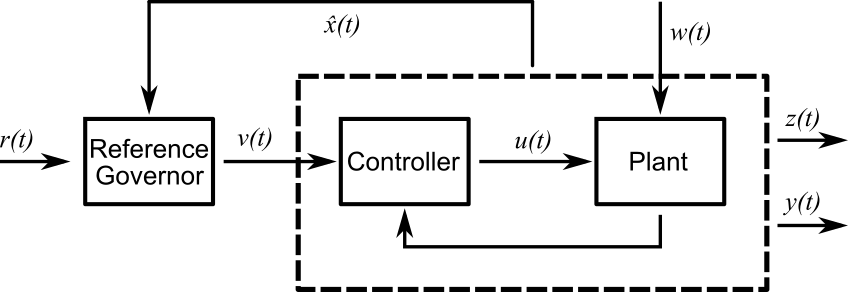}
		\caption{A typical reference governor attached to plant}
		\label{fig_image_RefGov}
	\end{center}
\end{figure}
\begin{comment}
\textcolor{blue}{A few words on the RG synthesis for nonlinear systems \cite{KalabicSO32017},	\cite{KalabicCDC2011}, \cite{KalabicSE3ACC2016}}
\textcolor{red}{\cite{AngeliIJRNC1999} proposes a command governor (CG) design
based on embedding the nonlinear system model into a family of
Linear Time Varying polytopic uncertain models. A similar idea
is used in \cite{FalugiIET2005}, where the nonlinear
system is embedded into a Linear Parameter Varying (LPV) system
that is controlled
}
\textcolor{red}{bisection algorithms were previously used... \cite{KolmanovskyNMPC2012} and \cite{KalabicSO32017}}
\\\\
 \textcolor{blue}{Here, we're interested by the case where the dynamics is polynomial...remark that there is now a vast literature on the control of polynomial systems...and there are also some papers which use these kind of tools even when the dynamics is not polynomial...}
\end{comment}
\\
In this paper, the problem of designing a reference governor for linear systems with polynomial constraints is addressed. The key idea is to embed the linear system into another higher dimensional linear system, the state of which, when correctly initialized, encompasses the state of the original linear system plus its higher order powers. Doing so, the polynomial inequality constraints required to design a reference governor become linear with respect to the extended state's coordinates.\\ The contributions of this paper are as follows. Firstly, we develop a procedure for the computation of the maximal output admissible set (MOAS) of a linear system with polynomial constraints. It is shown that this set is a subset of the aforementioned extended linear system with some linear constraints and that it is finitely determined under suitable conditions. The second contribution of this paper is the design of the reference governor utilizing thereby computed MOAS. This design, unlike conventional reference governors, exploits exponentially decaying reference dynamics so that the system cannot be stuck on a constraint. Then, the reference governor computation requires solving  a constrained nonlinear minimization problem at the initialization and then using a bisection algorithm at the subsequent time instants. Finally, numerical examples are reported to illustrate the proposed approach.
\section{Preliminaries and problem statement}

In the sections that follow the reference governor design will employ the use of the Kronecker product \cite{LOAN200085}. For that reason, this section first introduces the Kronecker product followed by a problem statement.

\subsection{Kronecker products and polynomial systems base vectors}
The Kronecker product of matrices A and B is denoted by $A \otimes B$. This product is non-commutative but associative and has the following useful mixed product property: if $A, B, C$ and $D$ are matrices of such size that one can form the matrix products $AB$ and $CD$, then
\[
(A B \otimes C D) =(A \otimes C)(B \otimes D).  
\]
Given a vector $x\in\mathbb{R}^{n_x}$, another vector $x^{p\otimes}\in\mathbb{R}^{p n_x}$ is defined by:
\begin{eqnarray}
x^{p\otimes}&:=&\underset{i=1..p}{\otimes} x = x\otimes (x \otimes\ldots (x\otimes x)) \\
&=& \begin{bmatrix}
x_1 x^{(p-1)\otimes}&
x_2 x^{(p-1)\otimes}&
\ldots &
x_{n_x}x^{(p-1)\otimes}
\end{bmatrix}^T
\end{eqnarray}
Since the product of two real numbers is commutative, it is not difficult to see that the vector $x^{p\otimes}$ possesses some redundant terms. We use the notation $x^p$ to denote a base vector containing all monomials $x_1^{i_1}\ldots x_{n_x}^{i_{n_x}}$ for which $i_1+\ldots +i_{n_x}=p$. Then, and as first remarked in \cite{ChesiTAC2003}, the dimension of the non-redundant vector $x^p$ is given by
\begin{equation}
\label{eq_sigma}
\sigma(n_x,p)=\frac{(n_x+p-1)!}{p! (n_x-1)!}.
\end{equation}
Following \cite{ValmorbidaTAC2013}, one can compute $x^p$ from $x^{p \otimes}$ using the following relation,
\[
x^p = M_c(n_x,p)x^{p \otimes},
\]
where $M_{c}(n_x,p)\in \mathbb{R}^{\sigma(n_x,p) \times p n_x}$.
Conversely, one can compute $x^{p \otimes}$ from $x^p$ using
\[
x^{p \otimes} = M_e(n_x,p)x^{p},
\]
where $M_{e}(n_x,p)\in \mathbb{R}^{ p n_x \times\sigma(n_x,p)}$. Note that for $i,j\in\mathbb{N}$, $M_c(i,j)$ and $M_e(i,j)$ are computed using an iterative algorithm which is reported in Appendix \ref{eq_Mc_Me}.
\\\\
For example, when $n_x=p=2$, we have $x=[x_1,x_2]^T$, $x^{2\otimes}=[x_1^2,x_1 x_2, x_2^2, x_2 x_1]^T$ and $x^{2}=[x_1^2,x_1 x_2, x_2^2]^T$.

\subsection{Problem statement}

Consider now a linear (pre-stabilized) discrete-time system with the model given by
\begin{equation}\label{class_sys}
x(k+1) = A x(k) + B v(k), 
\end{equation}
where $x\in\mathbb{R}^{n_x}$ is the state, $v\in\mathbb{R}^{n_v}$ is the reference governor output, and $A$ is a Schur matrix. Suppose that the desired reference is $r(k)=0$ and the input into (\ref{class_sys}), $v(k)$, is generated by:
\begin{equation}\label{eq_v}
v(k+1) = \lambda v(k),
\end{equation}
where $\lambda\in]0,1[$. 
\begin{remark}
	Note that $\lambda$ is equal to 1 in the classical reference governors \cite{GilbertTanTAC1991,Gilbert99fastRG}, however, $\lambda<1$ is also used in \cite{kalabiccdc2014} to handle the cases when the reference and/or the constraints are time-varying. Here, choosing $\lambda$ in $]0,1[$ will be useful in the sequel since, otherwise, the polynomial constraints could prevent $v(k)$ from tending to $r=0$ as $k\to\infty$.
\end{remark}
Let $x_v:= [x^T,v^T]^T \in\mathbb{R}^{n_x+n_v}$, be the state and reference vector which evolves according to
\begin{equation}\label{eq_xv}
x_v(k+1) = \Phi x_v(k), 
\end{equation}
where $\Phi=\begin{bmatrix}A & B \\O_{n_v,n_x} & \lambda I_{n_v}\end{bmatrix}$ is a Schur matrix since $\lambda\in]0,1[$.\\
Using the aforementioned notations, system (\ref{eq_xv}) is subject to $n_c$ polynomial constraints, which are expressed as follows
\begin{equation}\label{eq_poly_cons}
\sum_{j=1}^{\sigma(n_x+n_v,p)} c_{i,j} x^{j}_v \leq h_i, \quad i\in\{1,\ldots,n_c\},
\end{equation}
where the row matrices $c_{i,j}$ are in $\mathbb{R}^{1 \times \sigma(n_x+n_v,j)}$ and where, without any loss of generality, $h_i\geq 0$ for all $i\in\{1,\ldots,n_c\}$.
\begin{remark}
	Adding polynomial constraints is thus an operation very similar to adding linear constraints, except that we had first to specify a basis to represent the polynomial constraint as linear constraints.
\end{remark}
In this paper, we propose a reference governor strategy to ensure that the polynomial constraints (\ref{eq_poly_cons}) of a pre-stabilized linear system (\ref{class_sys}) are satisfied for all time while the reference governor output tends to the desired reference $r=0$. In the sequel, we propose a method both to compute the maximal output admissible set (i.e the set of initial states and references which guarantee present and future output constraint satisfaction) and to update the reference governor based on such a set.
\begin{remark}
For simplicity, only the desired $r=0$ reference is considered in this paper. When, $r\neq0$, (\ref{eq_v}) becomes $v(k+1)=\lambda v(k) + (1-\lambda)r$ and the state and reference vector to consider would be $x_{v,r}=[x^T,v^T,r^T]^T$ with $r(k+1)=r(k)=r$. The design proposed in the sequel still applies with the difference that $x_{v,r}$ replaces $x_v$.
\end{remark}
\section{Reference governor design}

The conventional reference governors (RG) are based on the computation of a slightly tightened version of the maximal output admissible set $O_\infty$ (MOAS) \cite{GilbertTanTAC1991,Gilbert99fastRG,GARONE2017306}.  Usually this set, denoted by $\tilde{O}_\infty$, is computed off-line, and the reference $v$ is updated on-line by solving a linear programming problem which is based on this inner approximation. Hence, the computational effort is generally small which is a strength of conventional RG. The objective of this section is to extend these ideas to linear systems subjected to polynomial inequality constraints. 
As such, we propose a new procedure
\begin{itemize}
	\item to compute (off-line) the maximal output admissible set $O_\infty$ for system (\ref{eq_xv}) with constraints (\ref{eq_poly_cons}),
	\item to update the reference governor online based on this set.
\end{itemize}
\begin{remark}
	As $\Phi$ in (\ref{eq_xv}) is Schur, the actual maximum output admissible set can be computed without requiring inner approximation. 
\end{remark}

% \textcolor{red}{the reference governor $v$ is computed on-line using this time a bisection algorithm. the price to pay is the initialization of the $v$ but this can also be done off-line using a grid...}
\subsection{Maximal output admissible set computation}

Let $p$ be given and consider the following state augmentation:
\[
X_v := \begin{bmatrix}x_v & x_v^2 & \ldots & x_v^p\end{bmatrix}^T
\]
Let $j\in\{1,p\}$ and observe that:
\begin{eqnarray}
\label{eq_xj_v}
x^j_v(k+1) &=& M_c(n_x+n_v,p)x_v^{j\otimes}(k+1) \nonumber \\&=&  M_c(n_x+n_v,j)(\otimes_{i=1:j}\Phi)M_e(n_x+n_v,j)x^{j}_v(k)\nonumber\\&:=&\Phi^jx^{j}_v(k)
\end{eqnarray}
i.e., the extended state vector of all monomials can be linearly propagated.
Before stating our main result, we require the following two lemmas.
\begin{lemma}
	if $\Phi$ is a Schur matrix then $\Phi^j$ is a Schur matrix for all $j\in\mathbb{N}^*$. 
\end{lemma}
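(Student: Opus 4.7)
The plan is to exploit the dynamical interpretation of $\Phi^j$ given in (\ref{eq_xj_v}): the $k$-th iterate $(\Phi^j)^k$ propagates $x_v^{j}(0)$ to $x_v^{j}(k)$, so it suffices to show that every trajectory of the linear system $y(k+1)=\Phi^j y(k)$ converges to the origin.

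First, because $\Phi$ is Schur, $x_v(k)\to 0$ for every $x_v(0)\in\mathbb{R}^{n_x+n_v}$, and each entry of $x_v^{j}(k)$, being a degree-$j$ monomial in the entries of $x_v(k)$, tends to zero as well. Consequently, $(\Phi^j)^k y \to 0$ whenever $y = x_v(0)^{j}$ for some $x_v(0)$. I would then argue that the set $\{x^{j} : x\in\mathbb{R}^{n_x+n_v}\}$ linearly spans $\mathbb{R}^{\sigma(n_x+n_v,j)}$: any linear functional annihilating every $x^{j}$ corresponds to a homogeneous polynomial of degree $j$ in $n_x+n_v$ variables that vanishes identically, hence is the zero functional. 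Writing an arbitrary $y$ as a finite linear combination of such Veronese-type vectors and using the linearity of $(\Phi^j)^k$, we obtain $(\Phi^j)^k y\to 0$ for every $y\in\mathbb{R}^{\sigma(n_x+n_v,j)}$, which is equivalent to the spectral radius of $\Phi^j$ being strictly less than one, i.e., $\Phi^j$ is Schur.

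The only delicate step is the spanning argument; everything else is routine. An alternative route that avoids it is to recall that $\bigotimes_{i=1}^{j}\Phi$ has spectrum $\{\mu_{i_1}\cdots\mu_{i_j}\}$ with each $\mu_{i_\ell}$ an eigenvalue of $\Phi$, and hence lying strictly inside the open unit disk, and then to verify the intertwining relation $(\bigotimes_{i=1}^{j}\Phi)\,M_e(n_x+n_v,j) = M_e(n_x+n_v,j)\,\Phi^j$; combined with the left invertibility $M_c(n_x+n_v,j)\,M_e(n_x+n_v,j) = I_{\sigma(n_x+n_v,j)}$ (which forces $M_e(n_x+n_v,j)$ to be injective), this shows that every eigenvalue of $\Phi^j$ is also an eigenvalue of $\bigotimes_{i=1}^{j}\Phi$, delivering the Schur conclusion directly.
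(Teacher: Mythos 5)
Your proof is correct, and your first route is essentially the paper's own argument — the paper simply asserts that global asymptotic stability of $x_v^j=0$ for the propagated monomial vector implies $\Phi^j$ is Schur. The value you add is that you make explicit the one step the paper glosses over: convergence of $(\Phi^j)^k y$ is only immediate for initial conditions $y$ lying on the Veronese-type set $\{x^{j}:x\in\mathbb{R}^{n_x+n_v}\}$, and one must argue that this set spans all of $\mathbb{R}^{\sigma(n_x+n_v,j)}$ (via the fact that a homogeneous degree-$j$ polynomial vanishing identically has zero coefficients) before concluding that the spectral radius of $\Phi^j$ is below one. Your alternative route is genuinely different from the paper's: it is purely algebraic, using that the spectrum of $\bigotimes_{i=1}^{j}\Phi$ consists of $j$-fold products of eigenvalues of $\Phi$, together with the intertwining relation $\bigl(\bigotimes_{i=1}^{j}\Phi\bigr)M_e(n_x+n_v,j)=M_e(n_x+n_v,j)\,\Phi^j$ and the injectivity of $M_e$ from $M_c M_e=I$, to embed the spectrum of $\Phi^j$ into that of the Kronecker power. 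This second argument buys you an explicit description of the eigenvalues of $\Phi^j$ (products $\mu_{i_1}\cdots\mu_{i_j}$), not merely the Schur property, and avoids any limiting argument; its only cost is that verifying the intertwining relation itself most naturally again passes through the spanning of the Veronese vectors, so the two routes are not entirely independent. Either way, the conclusion stands.
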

\begin{proof}
	The proof readily follows from the fact that if $\Phi$ is Schur, then $x_v=0$ is a globally asymptotically stable equilibrium of (\ref{eq_xv}). As a consequence, for all $j\in\mathbb{N}^*$, $x_v^j=0$ is a globally asymptotically stable equilibrium of (\ref{eq_xj_v}), which in turn implies that $\Phi^j$ is a Schur matrix.
\end{proof}
Let
\begin{eqnarray}
\Phi_Z&=&diag(\Phi^j, j\in\{1\ldots p\}) \quad,\quad H_Z=[h_1\ldots h_{n_c}]^T, \\
C_Z &=&(c_{i,j})_{i=1:n_c,j=1:\sigma(n_x+n_v,p)}, 
\end{eqnarray}
and consider the following extended system:
\begin{equation}\label{eq_sysZ}
Z(k+1)= \Phi_Z Z(k)
\end{equation}
where $Z_j\in\mathbb{R}^{\sigma(n_x+n_v,j)}$ and where $Z=[Z_1^T,\ldots,Z_p^T]^T \in\mathbb{R}^{\sum_{j=1 \ldots r}\sigma(n_x+n_v,j)}$ is subject to the constraints,
\begin{equation}\label{cons_Z}
y_Z:= C_z Z \leq H_Z .
\end{equation}
\begin{remark}
	Note that for particular initial conditions i.e $Z_{i}(0)=Z^i_1(0)=x_v^i(0)$ for all $i\in\{2,p\}$, $Z_1$ evolves according to our model (\ref{eq_xv}) and the polynomial constraints (\ref{eq_poly_cons}) exactly match the linear constraints (\ref{cons_Z}). 
\end{remark}
At this stage, we can also add additional constraints to system ($\ref{eq_sysZ}$) since some components of $Z$ must be necessarily positive when we impose $Z_{i}(0)=Z^i_1(0)$ and  $i$ is even. These new inequalities can be obtained using the Kronecker product.
Indeed, for each $e_{i,j}$ basis vector of $\mathbb{R}^{\sigma(n_x+n_v,j)}$ with $j^2\leq p$, we can impose the additional constraint $-e_{i,j}^2\leq 0$ by adding to $C_Z$ the line \begin{eqnarray}\Big[O_{(1,\sum_{k=1:j^2-1}\sigma(n_x+n_v,k))} ~,~ -(M_c(n_x+n_v,2) e_{i,j} \otimes e_{i,j})^T\nonumber \\~,~ O_{(1,\sum_{k=j^2+1:p}\sigma(n_x+n_v,k))}\Big],\nonumber\end{eqnarray}

and to $H_z$ the 'line' $0$.
\begin{lemma}
	Let $\Phi_Z$ be a Schur matrix, $\{y_Z,~ y_Z \leq H_Z \}$ be a compact set, and let the pair $(\Phi_Z,C_Z)$ be observable then $O_{\infty,Z}$, the maximal output admissible set of (\ref{eq_sysZ})-(\ref{cons_Z}), is finitely determined and is forward invariant.
\end{lemma}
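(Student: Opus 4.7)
The proof will proceed by recognizing this lemma as a direct instance of the classical Gilbert–Tan framework \cite{GilbertTanTAC1991} for the maximal output admissible set of a linear system with linear output constraints. Indeed, (\ref{eq_sysZ}) is linear in the augmented state $Z$ and the constraints (\ref{cons_Z}) are linear in $Z$, so all three hypotheses of the lemma correspond exactly to the standing assumptions of that framework.

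First, I would write $O_{\infty,Z}$ explicitly as the intersection of half-spaces
\[
O_{\infty,Z} = \bigcap_{k=0}^{\infty}\{Z_0 : C_Z \Phi_Z^{k}Z_0 \leq H_Z\},
\]
and deduce forward invariance directly from this definition: if $Z_0\in O_{\infty,Z}$, then $\Phi_Z Z_0$ satisfies $C_Z\Phi_Z^{k}(\Phi_Z Z_0)=C_Z\Phi_Z^{k+1}Z_0\leq H_Z$ for all $k\geq 0$, hence $\Phi_Z Z_0 \in O_{\infty,Z}$. This step is essentially immediate.

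Next, for finite determinability, I would introduce the truncated sets
\[
O_{k,Z}=\bigcap_{i=0}^{k}\{Z_0 : C_Z\Phi_Z^{i}Z_0\leq H_Z\},
\]
note that $\{O_{k,Z}\}$ is a decreasing sequence with $O_{\infty,Z}=\bigcap_{k\geq 0}O_{k,Z}$, and show that $O_{k^\ast+1,Z}=O_{k^\ast,Z}$ implies $O_{\infty,Z}=O_{k^\ast,Z}$ (a standard semi-group argument based on $O_{k+1,Z}=\Phi_Z^{-1}O_{k,Z}\cap O_{0,Z}$ or on forward invariance of $O_{k^\ast,Z}$). Then I would argue that some such $k^\ast$ must exist: observability of $(\Phi_Z,C_Z)$ together with the Schur property of $\Phi_Z$ implies that the output trajectory $C_Z\Phi_Z^{k}Z_0$ tends to zero as $k\to\infty$ uniformly on compact sets of initial conditions, so that for sufficiently large $k$ the constraint $C_Z\Phi_Z^{k}Z_0\leq H_Z$ becomes inactive on $O_{k^\ast,Z}$, which is bounded because compactness of $\{y_Z:y_Z\leq H_Z\}$ together with observability forces $O_{0,Z}$ (and hence every $O_{k,Z}$) to be compact.

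The main technical obstacle is precisely this last point: invoking Gilbert–Tan requires the origin to lie in the interior of the admissible output set, which needs $H_Z>0$ componentwise. The lemma's statement only gives $h_i\geq 0$, so I would either add a brief remark strengthening this to strict positivity (which is harmless since any constraint with $h_i=0$ that is strictly active at $0$ would make $O_{\infty,Z}=\{0\}$ and the result trivial), or invoke the slightly more general form of the Gilbert–Tan argument that works whenever $0$ is admissible and the compactness plus observability assumptions hold. With that resolved, the finite determinability and forward invariance claims follow at once.
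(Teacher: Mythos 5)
Your proposal takes the same route as the paper, whose entire proof is a one-line appeal to the classical arguments of \cite{GilbertTanTAC1991}; your expanded version of that argument (forward invariance from the intersection representation, finite determination via the truncated sets and the stopping criterion) is correct in substance. One small misstatement: observability of $(\Phi_Z,C_Z)$ does not make $O_{0,Z}=\{Z: C_Z Z\leq H_Z\}$ compact (it is an unbounded polyhedron whenever $C_Z$ lacks full column rank); what the Gilbert--Tan machinery actually yields is boundedness of $O_{k,Z}$ for $k\geq n_Z-1$, where $n_Z=\dim Z$, since only then does the full observability matrix enter the constraint list. Your point that finite determination really needs $0$ in the interior of the output constraint set, i.e.\ $h_i>0$ rather than the paper's $h_i\geq 0$, is a legitimate technical caveat that the paper's one-line proof silently skips.
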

\begin{proof}
	The proof readily follows from the classical arguments of \cite{GilbertTanTAC1991}. 
\end{proof}
Let us now focus our interest on the subset of $O_{\infty,Z}$ such that $Z_i= Z^i_1$ for all $i\in\{2,p\}$. We've got the following result:
\begin{theorem}\label{thm1}
	Let $\Phi_Z$ be a Schur matrix, $\{y_Z,~ y_Z \leq H_Z \}$ be a compact set, and let the pair $(\Phi_Z,C_Z)$ be observable. Then $O_{\infty,X}$, the maximal output admissible set of (\ref{eq_xv})-(\ref{eq_poly_cons}), is finitely determined and is forward invariant.
\end{theorem}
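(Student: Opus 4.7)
The plan is to transfer the two properties that the previous lemma has already established for the extended linear system $(\ref{eq_sysZ})$--$(\ref{cons_Z})$ back to the original polynomially constrained system $(\ref{eq_xv})$--$(\ref{eq_poly_cons})$ through the lifting map $\pi:x_v\mapsto Z=[x_v^T,(x_v^2)^T,\ldots,(x_v^p)^T]^T$. The key structural fact is $(\ref{eq_xj_v})$: when $Z(0)=\pi(x_v(0))$, the identity $Z_j(k)=x_v^j(k)$ is preserved by the dynamics for every $k\in\mathbb{N}$ and every $j\in\{1,\ldots,p\}$. Hence the linear inequalities $C_Z Z(k)\le H_Z$ evaluated along the lifted trajectory are \emph{exactly} the polynomial inequalities $(\ref{eq_poly_cons})$ evaluated along the original trajectory. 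This gives the crucial equivalence
\[
x_v(0)\in O_{\infty,X} \;\Longleftrightarrow\; \pi(x_v(0))\in O_{\infty,Z}.
\]

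With this equivalence in hand, forward invariance is immediate. If $x_v(0)\in O_{\infty,X}$, then $Z(0)=\pi(x_v(0))\in O_{\infty,Z}$; by Lemma~2, $Z(k)\in O_{\infty,Z}$ for all $k$; but the identity $Z_j(k)=x_v^j(k)$ shows $Z(k)=\pi(x_v(k))$, so applying the equivalence in the reverse direction yields $x_v(k)\in O_{\infty,X}$ for all $k$.

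Finite determination is handled in the same way. By Lemma~2 there exists a finite $k^\star$ such that $O_{\infty,Z}=\{Z(0):C_Z\Phi_Z^k Z(0)\le H_Z,\ k=0,\ldots,k^\star\}$. Pulling this back through $\pi$ and using $(\ref{eq_xj_v})$, one obtains
\[
O_{\infty,X}=\bigl\{x_v(0):\textstyle\sum_{j=1}^{\sigma(n_x+n_v,p)} c_{i,j}\, (\Phi^j)^k x_v^{j}(0)\le h_i,\ i=1,\ldots,n_c,\ k=0,\ldots,k^\star\bigr\},
\]
i.e.\ a finite intersection of polynomial sublevel sets in the $x_v$ coordinates, so $O_{\infty,X}$ is finitely determined in precisely the sense used for linearly constrained MOAS.

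The one subtlety to be explicit about --- and the place I expect the argument to require the most care --- is the consistency between the \emph{lifted} trajectory $Z(k)=\pi(x_v(k))$ and the general trajectory of the extended system: only the former automatically satisfies the additional positivity cuts added to $C_Z$ (such as $-e_{i,j}^2\le 0$), because for lifted initial conditions the even-power blocks of $Z$ are non-negative along the entire trajectory. These extra rows of $C_Z$ therefore do not exclude any point of the form $\pi(x_v)$ and the equivalence above is preserved; conversely, they are what guarantees observability of $(\Phi_Z,C_Z)$ and compactness of the constraint set, so that Lemma~2 applies. Once this consistency is recorded, the theorem follows by combining Lemma~2 with the equivalence above.
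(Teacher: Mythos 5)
Your proof is correct and takes essentially the same route as the paper's: both identify $O_{\infty,X}$ with the set of lifted points $\pi(x_v)=[x_v^T,(x_v^2)^T,\ldots,(x_v^p)^T]^T$ lying in $O_{\infty,Z}$, use the fact that the lifting is preserved by the dynamics (equation (\ref{eq_xj_v})) to inherit forward invariance, and pull back the finite determination of $O_{\infty,Z}$ to finitely many polynomial inequalities on $x_v$. Your explicit remark that the added positivity cuts are automatically satisfied along lifted trajectories is a welcome clarification that the paper leaves implicit.
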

\begin{proof} it is easy to see that $O_{\infty,X}$ is the following subset
	\[O_{\infty,X}=\{Z\in O_{\infty,Z} ~s.t~Z_i=Z_1^i \}.\]
As such, it is finitely determined under our assumptions, as follows from \cite{GilbertTanTAC1991}.
Let $Z(0)\in O_{\infty,X}$, then for all $k\in\mathbb{N}$, $Z(k)\in O_{\infty,Z}$ since $O_{\infty,X} \subset O_{\infty,Z}$, which is forward invariant. However, $Z_i(k)=Z_1(k)^i$ since $Z(k+1)=\Phi_Z Z(k)$. So, for all $k\in\mathbb{N}$, $Z_i(k) \in O_{\infty,Z}$.
\end{proof}
Theorem \ref{thm1} requires that the pair $(\Phi_Z,C_Z)$ be observable which may not hold when the dimension of $\Phi_Z$ is very large and the  number of constraints $n_c$ is small. The following result provides a relaxed but more practical way to satisfy these requirements. 
\begin{proposition}
Let $H^1_Z=H_Z(1:n_x+n_v,1)$. Suppose there exist	$C^1_Z \in \mathbb{R}^{n_c \times (n_x+n_v)}$ such that
$\{Z=[Z_1,Z_1^2,\ldots,Z_1^p],~C_Z Z \leq H_Z\} \subset \{Z_1,~C^1_Z Z_1 \leq H^1_Z\}$, a convex compact set, and that $(\Phi,C^1_Z)$ is observable. Then the maximal output admissible set of (\ref{eq_xv})-(\ref{eq_poly_cons}) is finitely determined and is forward invariant.
\end{proposition}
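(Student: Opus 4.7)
The plan is to exploit the outer linear relaxation to reduce the polynomial MOAS to a finite intersection, by first applying Gilbert--Tan to the linear relaxation and then using the Schur dynamics to argue that only finitely many polynomial constraints can ever be binding.

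First I would apply Lemma 2 to the reduced linear system $Z_1(k+1)=\Phi Z_1(k)$ with the linear constraints $C^1_Z Z_1 \leq H^1_Z$. The hypotheses of Lemma 2 all hold: $\Phi$ from (\ref{eq_xv}) is Schur, the set $L:=\{Z_1 : C^1_Z Z_1 \leq H^1_Z\}$ is convex and compact by assumption, and $(\Phi,C^1_Z)$ is observable by assumption. Lemma 2 then yields a finitely-determined, forward-invariant MOAS $O^{\mathrm{lin}}_\infty \subset L$ for this auxiliary linear problem.

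Next, let $P:=\{Z_1 : C_Z[Z_1,Z_1^2,\ldots,Z_1^p]\leq H_Z\}$ denote the polynomial feasible set projected onto $Z_1$. The inclusion hypothesis says $P\subset L$, so any polynomial-admissible trajectory is also linear-admissible, giving $O^{\mathrm{poly}}_\infty \subset O^{\mathrm{lin}}_\infty$ and in particular containment in a compact set. Because $\Phi$ is Schur, there exist $c>0$ and $\rho\in(0,1)$ with $\|\Phi^k Z_1(0)\|\leq c\rho^k$ for every $Z_1(0)$ in the compact set $O^{\mathrm{lin}}_\infty$, so trajectories tend to $0$ at a uniform exponential rate.

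Each polynomial in (\ref{eq_poly_cons}) uses only monomials of degree at least $1$ in $x_v$, hence vanishes at $Z_1=0$, and $h_i \geq 0$, so $0\in P$; by continuity a ball $B(0,\epsilon)$ is contained in $P$. The uniform decay then produces an integer $k^{\star\star}$ with $\Phi^k Z_1(0)\in B(0,\epsilon)\subset P$ for every $k\geq k^{\star\star}$ and every $Z_1(0)\in O^{\mathrm{lin}}_\infty$. Hence the infinite collection of constraints defining $O^{\mathrm{poly}}_\infty$ collapses to a finite one: $Z_1(0)\in O^{\mathrm{poly}}_\infty$ iff $Z_1(0)\in O^{\mathrm{lin}}_\infty$ and $\Phi^k Z_1(0)\in P$ for $k=0,\ldots,k^{\star\star}-1$. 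This proves finite determination, and forward invariance of $O^{\mathrm{poly}}_\infty$ is immediate from the time-shift of the defining condition.

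The hard part will be the degenerate case where some $h_i = 0$, in which the origin lies on the boundary of $P$ and the ``ball inside $P$'' step as stated fails. One would then recover the argument either by a dedicated Lyapunov-type estimate showing that the tight polynomial constraints remain nonpositive along trajectories in a small neighborhood of $0$, or by strengthening the hypothesis to strict feasibility of (\ref{eq_poly_cons}) at the origin, so that a genuine neighborhood of $0$ sits inside $P$.
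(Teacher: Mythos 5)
Your argument is correct in substance but takes a genuinely different route from the paper's. The paper stays inside the lifted linear framework: it uses the finitely determined, compact set $O_{\infty,Z_1}$ to extract, by linear programming, componentwise bounds on $Z_1$, propagates these to bounds on every component of $Z_1^i$, appends the resulting inequalities as new rows $C_Z^a,H_Z^a$ so that the lifted output set becomes compact and $(\Phi_Z,C_Z^a)$ becomes observable, and then invokes Theorem~\ref{thm1}. You instead bypass the lifted system and rerun the Gilbert--Tan finite-determination mechanism directly on the polynomial constraints: the relaxation $P\subset L$ traps every admissible initial condition in a compact set, the Schur property gives uniform exponential decay there, and a ball around the origin inside $P$ renders all constraints beyond a finite horizon $k^{\star\star}$ inactive. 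Your characterization is indeed a finite system of inequalities, and since $O_t\subset O^{\mathrm{lin}}_t$ for every $t$, it yields $O_{t^*}=O_\infty$ with $t^*=\max(t^{\mathrm{lin}},k^{\star\star}-1)$, i.e.\ finite determination in the standard sense (worth spelling out). Notably, your route does not really need observability of $(\Phi,C^1_Z)$: one can replace $O^{\mathrm{lin}}_\infty$ by $L$ itself, since $O_0=P\subset L$ is already compact and supports the uniform decay estimate, giving $O_{k^{\star\star}-1}=O_\infty$ directly. What your approach buys is generality (any continuous constraint vanishing at the origin would do) and fewer hypotheses; what it loses is the explicit finitely determined polyhedral representation of $O_{\infty,X}$ in the lifted coordinates $Z$, which is precisely the object the paper's offline computation and online bisection algorithm operate on.

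The one loose end is the one you flag yourself: the interior-ball step needs $0\in\mathrm{int}(P)$, i.e.\ $h_i>0$ for all $i$, whereas the paper only assumes $h_i\geq 0$ (and the positivity constraints it appends to $C_Z$ have zero right-hand side). This is not a defect specific to your proof: the Gilbert--Tan finite-determination theorem underlying Lemma~2, Theorem~\ref{thm1} and hence this Proposition likewise requires the origin to lie in the interior of the output constraint set, a hypothesis the paper leaves implicit; the standard remedy is either strict feasibility at the origin or the slight tightening $\tilde O_\infty$ mentioned in the paper. Resolving your caveat in either of the two ways you propose would make the argument complete, and on the same footing as the paper's own proof.
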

\begin{proof}
	Simply observe that in this case the set $O_{\infty,Z_1}$ is finitely determined and compact. Then, the maximum and minimum values of $Z_1(i)$ can be determined solving a linear programming problem. Then, given these bounds, one can compute some bounds for each components of $Z_1^i$ for all $i\in\{1,\ldots,p\}$. One can add these bounds as new rows in $C_Z$ and $H_Z$. These new linear constraints are denoted by $C_Z^a$ and $H_Z^a$. With each higher order component being bounded, it is easy to see that $(\Phi_Z,C^a_Z)$ is observable if $(\Phi,C^1_Z)$ is observable. Finally, the proof is completed by applying Theorem \ref{thm1} where $C^a_Z$ replaces $C_Z$.
\end{proof}

\subsection{Reference governor update}

Let $O_{\infty,X}$ be the maximal output admissible set defined in the previous section. 
Given an initial state $x(0)$, $v(0)$ is computed as a solution to the following optimization problem
\begin{eqnarray}
&\underset{v}{\min}& v^T v \\
&s.t& (x(0),v)\in O_{\infty,X}
\end{eqnarray}
\begin{remark}
Before solving this nonlinear optimization problem, we recommend eliminating the redundant inequalities from the representation of $O_{\infty,X}$ for the particular choice of $x=x(0)$.
\end{remark}
Then $v$ is computed at each next time instant using a bisection algorithm %(see Appendix \ref{bisection} for more details). 
Note that $v(k)$ tends to $0$ because $\lambda\in]0,1[$. The preview time of the bisection algorithm does not need to be 'heuristically' chosen since it is directly applied to the set of inequalities that defines the MOAS. To reduce the computational burden and thus facilitate the practical application of the proposed method, we suggest to calculate 'off-line' both the MOAS and the initial value $v(0)$ on a grid of initial conditions. Doing so, only a simple bisection algorithm needs to be run 'on-line'.

\begin{comment}
\section{Other practical 'hints' (to be deleted)}
\subsection{Side problem: fixed preview time of a bisection algorithm}
It may be possible to shrink $O_\infty$ so that one can use a given preview time...
\subsection{Pre and Post processing in practice}
For $v$ constant, observe that the dynamics of $X_v^r$ is completely independent of the one of $X_v^{j}$ where $j\neq r$... so multiple time scales can be used...
\begin{itemize}
	\item one can search for a shorter horizon: in this case one can compute a set using less inequalities, use a tolerance to reduce it and then prove that it is included in $O_\infty$.
	\item for the initialization of the bisection, one can compute $v$ on a smaller horizon and then check that it satisfies all the inequalities
\end{itemize}
\end{comment}

\section{Numerical examples}

\subsection{Stall prevention of a civil aircraft}

We consider the following aircraft longitudinal dynamics model based on \cite{NicotraCM2018} with $\cos(\alpha)$ approximated by 1:
\[
\ddot{\alpha} = -\frac{d_1}{J}L(\alpha)+ \frac{d_2}{J} u,
\]
where
\[
L(\alpha)=l_0+l_1 \alpha -l_3 \alpha^3,
\]
and $d_1=4m$, $d_2=42m$, $J=4.5\times10^5 Nm^2$, $l_0=2.5\times10^{5}N$, $l_1=8.6\times10^{6}N/rad$ and $l_3=4.35\times10^7 N/rad^3$.
The angle of attack $\alpha$ is constrained by the stall limit as $-0.2\times\frac{\pi}{180}\leq \alpha\leq 14.7\times\frac{\pi}{180}$rad. The control input is the elevator force $u$ and must satisfy $|u|\leq 4. 10^5$ N.\\
\\
Applying a dynamic inversion, $u=-k_p(\alpha+v)-k_d \dot{\alpha}+\frac{d_1}{d_2}L(\alpha)$ where $k_p =5.2\times10^7$, $k_d = 7.6\times10^6$, and discretizing the system with sampling period $T_s=0.01s$, we obtain the (pre-stabilized) second order model (\ref{class_sys}) with
\[
A =\begin{bmatrix}
0.9814  &  0.0072\\
-3.3347 &  0.4940 
\end{bmatrix} , B =\begin{bmatrix}
0.0186 \\ 3.3347
\end{bmatrix}, 
\]
This system is linear but the input inequality constraints are polynomial of order 3.
Considering the extended state $x_v$ and $\nu$ defined by (\ref{eq_v}) with $\lambda =.98$, 
we compute the MOAS $O_{\infty,Z_1}$. Note that we only consider the linear constraint on $\alpha$ since the system is observable in this case. The MOAS is finitely determined in 77 iterations and is defined by 107 non-redundant linear inequalities. As this set is compact, we can directly compute some bounds on all the components of the extended state and thus deduce some bounds on its powers. Following the idea of Proposition 1, this allows us to directly check that MOAS $O_{\infty,Z}$ can in turn be finitely determined when we extend the state and add all these constraints. The $O_{\infty,Z}$ is determined in 31 iterations and is defined by 298 non-redundant linear inequalities.
Figure \ref{fig_image_ex1} illustrates the constrained outputs responses obtained using (in blue) or not using (in magenta) the proposed reference governor when $\alpha(0)=14\; deg$ and $\dot\alpha(0)=0\; rad/s$. In the absence of a reference governor, the control input limits are violated. However, all the output constraints are satisfied with the implementation of the proposed reference governor strategy.
The projection of the MOAS set on the $(\alpha,\dot{\alpha})$ plane gives the set of initial conditions from which the state can be stabilized while respecting both the linear and polynomial constraints. Figure 3 shows this set and used a grid to calculate it. The constrained nonlinear minimization problem was solved at each point of the grid to determine whether or not the point belongs to this set.
\begin{figure}[!ht]
	\begin{center}
		\includegraphics[width=8.5cm]{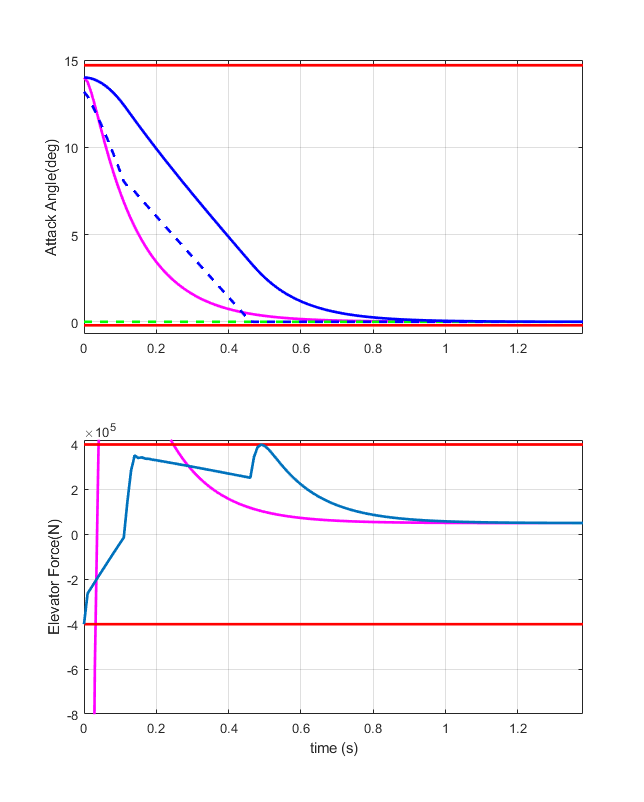}
		\caption{Constrained outputs. Red: upper and lower limitations. Magenta: when one applies the dynamic inversion without any reference governor, that is to say when $v=r=0$. Blue: when one uses the proposed reference governor. Dashed blue: evolution of the reference $v(t)$ of the reference governor. Dashed green: desired angle of attack.}
		\label{fig_image_ex1}
	\end{center}
\end{figure}
\begin{figure}[!ht]
	\begin{center}
		\includegraphics[width=7.5cm]{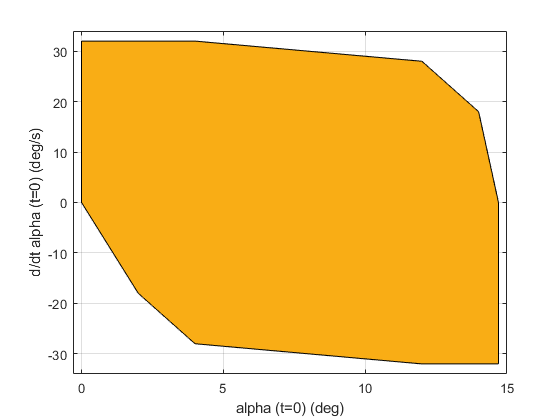}
		\caption{Projection of the MOAS set}
		\label{fig_image_ex12}
	\end{center}
\end{figure}

\subsection{Obstacle avoidance in the 2D plane}

The second numerical example shows the method's applicability to a controlled system subject to a non-convex polynomial constraint. 
We consider $\ddot{x}=u$ where $x,u\in\mathbb{R}^2$. This system is pre-stabilized using $u=-x-2\dot{x}+v$ where $v$ is the reference governor input. Then, we discretize it with sampling period $T_s =0.5$s.
The linear constraints are $|x|\leq 20$ and $|\dot{x}|\leq 5$. The nonlinear constraint $(x-x_{obs})^T (x-x_{obs})\leq r_{obs}^2$ represents a circular obstacle located at $x_{obs}=[10;0]$ with a radius $r_{obs}=2$. 
Considering the extended state $x_v$ and $\nu$ defined by (\ref{eq_v}) with $\lambda =.98$, 
we first compute the MOAS $O_{\infty,Z_1}$. This set is finitely determined in 154 iterations and is defined by 90 non-redundant linear inequalities. Then, $O_{\infty,Z}$ is determined when we extend the state and add the nonlinear constraint. Here, $O_{\infty,Z}$ is determined in 188 iterations and is defined by 217 non-redundant linear inequalities.
Figure 4 illustrates the obstacle avoidance realized with the proposed reference governor strategy when $x(0)=[20;1]$ and $\dot{x}(0)=[0;0]$. Figure 5 shows the evolution of the reference input vector used in this case.
\begin{figure}[!ht]
	\begin{center}
		\includegraphics[width=8.5cm]{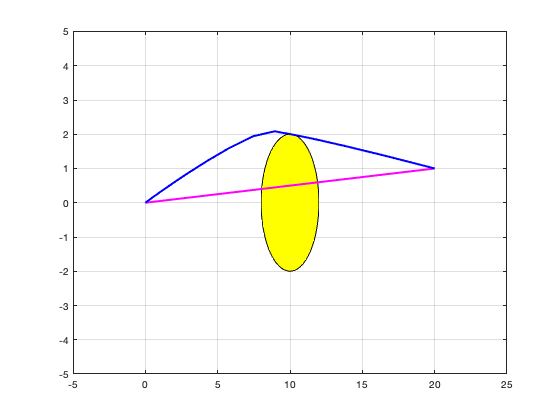}
		\caption{Illustration of the obstacle avoidance. Yellow: obstacle. Magenta: without any reference governor, that is to say when $v=r=0$. Blue: when one uses the proposed reference governor.}
		\label{fig_image_ex21}
	\end{center}
\end{figure}
\begin{figure}[!ht]
	\begin{center}
		\includegraphics[width=7.5cm]{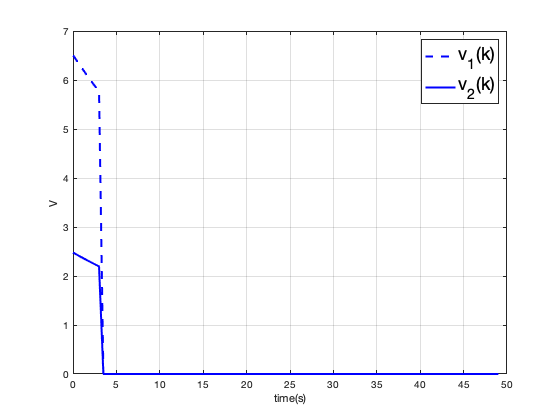}
		\caption{Time evolution of the reference governor}
		\label{fig_image_ex22}
	\end{center}
\end{figure}

\section{Concluding remarks}

The developments in this paper are based on the observation that the propagation of some polynomial constraints through a LTI system can be accomplished by propagating some linear constraints  through a higher dimensional LTI system. This permits extending the design of conventional reference governors to the class of LTI systems with polynomial constraints. Numerical results were reported to demonstrate the simplicity and practicality of the proposed method.

\bibliographystyle{unsrt}
\bibliography{biblioRGpolynomial}

\begin{thebibliography}{10}

\bibitem{GARONE2017306}
E.~Garone, S.~Di Cairano, and I.~Kolmanovsky.
\newblock Reference and command governors for systems with constraints: A
  survey on theory and applications.
\newblock {\em Automatica}, 75(Supplement C):306 -- 328, 2017.

\bibitem{Isidori1995}
Alberto Isidori.
\newblock {\em Nonlinear Control Systems}.
\newblock Springer, London, 1995.

\bibitem{Krener2015}
A.J. Krener.
\newblock Feedback linearization of nonlinear systems.
\newblock In J.~Baillieul and T.~Samad, editors, {\em Encyclopedia of Systems
  and Control}, pages 428--437. Springer London, 2015.

\bibitem{LOAN200085}
Charles~F.Van Loan.
\newblock The ubiquitous kronecker product.
\newblock {\em Journal of Computational and Applied Mathematics},
  123(1):85--100, 2000.
\newblock Numerical Analysis 2000. Vol. III: Linear Algebra.

\bibitem{ChesiTAC2003}
G.~Chesi, A.~Garulli, A.~Tesi, and A.~Vicino.
\newblock Solving quadratic distance problems: an lmi-based approach.
\newblock {\em IEEE Transactions on Automatic Control}, 48(2):200--212, 2003.

\bibitem{ValmorbidaTAC2013}
G.~Valmorbida, S.~Tarbouriech, and G.~Garcia.
\newblock Design of polynomial control laws for polynomial systems subject to
  actuator saturation.
\newblock {\em IEEE Transactions on Automatic Control}, 58(7):1758--1770, 2013.

\bibitem{GilbertTanTAC1991}
E.G. Gilbert and K.T. Tan.
\newblock Linear systems with state and control constraints: the theory and
  application of maximal output admissible sets.
\newblock {\em IEEE Transactions on Automatic Control}, 36(9):1008--1020, Sep
  1991.

\bibitem{Gilbert99fastRG}
E.G. Gilbert and I.V. Kolmanovsky.
\newblock Fast reference governors for systems with state and control
  constraints and disturbance inputs.
\newblock {\em International Journal of Robust and Nonlinear Control},
  9(15):1117--1141, 1999.

\bibitem{kalabiccdc2014}
U.~Kalabi{\'c} and I.~Kolmanovsky.
\newblock Reference and command governors for systems with slowly time-varying
  references and time-dependent constraints.
\newblock In {\em 53rd IEEE Conference on Decision and Control}, pages
  6701--6706, Dec 2014.

\bibitem{NicotraCM2018}
M.M. Nicotra and E.~Garone.
\newblock The explicit reference governor: A general framework for the
  closed-form control of constrained nonlinear systems.
\newblock {\em IEEE Control Systems}, 38(4):89--107, 2018.

\end{thebibliography}

\appendix
\subsection{Computation of $M_c(n,p)$ and $M_e(n,p)$}\label{eq_Mc_Me}

For the reader's convenience, here we review the iterative algorithm which was presented in (\cite{ValmorbidaTAC2013}, Appendix B).\\
First, initialize $\forall p, M_t(1,p)=1$ and $\forall n, M_t(n,1)=I_n$.\\
Then, compute iteratively:
\begin{equation}
M_t(n,p)=\begin{bmatrix}M_t(n,p-1)& O_{n\sigma(n,p-2)\times \sigma(n-1,p)} \\
M_z(n,p) & M_{t0}(n,p)\end{bmatrix}
\end{equation}
with
\begin{eqnarray}
M_z(n,p)&=&\begin{bmatrix}O_{n\sigma(n-1,p-1)\times\sigma(n,p-2)}& I_{\sigma(n-1,p-1)}\otimes e_1
\end{bmatrix} \nonumber \\
M_{t0}(n,p)&=& \left(I_{\sigma(n-1,p-1)} \otimes \begin{bmatrix}O_{1,n-1}\\I_{n-1}\end{bmatrix}\right)M_t(n-1,p)\label{eqMt0}
\end{eqnarray}
Noting $L_t(n,p)=M_t(n,p)^{\dagger}$, one has:
\begin{eqnarray}
M_c(n,p)&=& \Pi_{i=p}^2  (L_t(n,i)\otimes I_{n^{p-i}}) \\&=& L_t(n,p) (L_t(n,p-1)\otimes I_n)\ldots (L_t(n,2)\otimes I_{n^{p-2}})\nonumber\\&&
\end{eqnarray}
\begin{eqnarray}
M_e(n,p)&=& \Pi_{i=2}^p  (M_t(n,i)\otimes I_{n^{p-i}}) \\&=& (M_t(n,2)\otimes I_{n^{p-2}})\ldots(M_t(n,p-1)\otimes I_n) M_t(n,p) \nonumber\\&&
\end{eqnarray}
\begin{remark}
	The expression of $M_{t0}$ retained in \cite{ValmorbidaTAC2013} is slightly more complicated. Instead, we found easier to use (\ref{eqMt0}) which is obtained using the Kronecker product mixed property.
\end{remark}

\end{document}